\pgfplotsset{compat=1.18}
\newclass{\UGC}{UGC}
\renewcommand\epsilon{\ensuremath{\varepsilon}}
\newcommand{\thresh}{\ensuremath{\mathcal{THRESH}}}
\newcommand{\rot}{\ensuremath{\mathcal{ROT}}}
\newcommand{\maxcut}{\textsc{Max-Cut}}
\newcommand{\cutorient}{\textsc{Cut-Orient}}
\newcommand{\maxdicut}{\textsc{Max-DiCut}}
\newcommand{\maxtwosat}{\textsc{Max-2Sat}}
\newcommand{\maxtwoand}{\textsc{Max-2And}}
\newcommand{\maxdicutcut}{\textsc{Max-DiCut-vs.-Cut}}
\newcommand{\maxcolour}{\textsc{Max-$k$-Col-vs.-$\ell$-Col}}
\DeclareMathOperator*{\Exp}{\mathbb{E}}
\DeclareMathOperator{\sgn}{sgn}
\DeclareMathOperator{\arccsc}{arccsc}
\DeclareMathOperator{\cut}{\textsc{Cut}}
\DeclareMathOperator{\dicut}{\textsc{DiCut}}
\renewcommand{\R}{\ensuremath{\mathbb{R}}}
\newcommand{\N}{\ensuremath{\mathbb{N}}}
\newcommand{\vt}{\ensuremath{\mathbf{x}_0}}
\newcommand{\va}{\ensuremath{\mathbf{a}}}
\newcommand{\vx}{\ensuremath{\mathbf{x}}}
\newcommand{\vy}{\ensuremath{\mathbf{y}}}
\newcommand{\vz}{\ensuremath{\mathbf{z}}}
\theoremstyle{plain}
\newtheorem{theorem}{Theorem}
\newtheorem{lemma}[theorem]{Lemma}
\newtheorem*{lemma*}{Lemma}
\newtheorem*{proposition*}{Proposition}
\newtheorem*{corollary*}{Corollary}
\theoremstyle{definition}
\newtheorem{definition}[theorem]{Definition}
\begin{document}

\author{Tamio-Vesa Nakajima\\
University of Oxford\\
\texttt{tamio-vesa.nakajima@cs.ox.ac.uk}
\and
Stanislav \v{Z}ivn\'y\\
University of Oxford\\
\texttt{standa.zivny@cs.ox.ac.uk}
}

\title{An approximation algorithm for Maximum DiCut vs.~Cut\thanks{This work was supported by UKRI EP/X024431/1 and by a Clarendon Fund Scholarship. For the purpose of Open Access, the authors have applied a CC BY public copyright licence to any Author Accepted Manuscript version arising from this submission. All data is provided in full in the results section of this paper.}}

\date{\today}
\maketitle

\begin{abstract}

Goemans and Williamson designed a 0.878-approximation algorithm for \maxcut{} in undirected graphs~[JACM'95]. Khot, Kindler, Mosel, and O'Donnel showed that the approximation ratio of the Goemans-Williamson algorithm is optimal assuming Khot's Unique Games Conjecture~[SICOMP'07]. In the problem of maximum cuts in \emph{directed} graphs (\maxdicut), in which we seek as many edges going from one particular side of the cut to the other, the situation is more complicated but the recent work of Brakensiek, Huang, Potechin, and Zwick showed that their 0.874-approximation algorithm is tight under the Unique Games Conjecture (up to a small delta)~[FOCS'23].

We consider a promise version of the problem and design an SDP-based algorithm which, if given a directed graph $G$ that has a directed cut of value $\rho$, finds an undirected cut in $G$ (ignoring edge directions) with value at least $\rho$.

\end{abstract}

\section{Introduction}

The maximum cut problem in \emph{undirected} graphs (\maxcut) is a fundamental
problem, seeking a partition of the vertex set into two
parts while maximising the number of edges going across. While \maxcut{} is
\NP-complete~\cite{Karp1972}, a random assignment leads to a $1/2$-approximation
algorithm. In their influential work, Goemans and Williamson gave the first
improvement and presented an SDP-based $\alpha_{\text{GW}}$-approximation
algorithm~\cite{GW95}, where $\alpha_{\text{GW}}\approx 0.878$.
Under Khot's Unique Games Conjecture (\UGC)~\cite{Khot02stoc}, this approximation factor is optimal~\cite{KKMO07,Mossel10:ann}. The currently best known inapproximability result not relying on \UGC{} is 
$16/17\approx 0.941$~\cite{Trevisan00:sicomp} (obtained by a gadget from H{\aa}stad's optimal inapproximability result~\cite{Hastad01}).

The maximum cut problem in \emph{directed} graphs (\maxdicut) is a closely related and well-studied \NP-complete problem, seeking a partition of the vertex set into two parts while maximising the number of edges going across in the prescribed direction. A random assignment leads to a $1/4$-approximation algorithm. In the first improvement over the random assignment, Goemans and Williamson presented an SDP-based $0.796$-approximation algorithm~\cite{GW95}. By considering a stronger SDP formulation (with triangle inequalities), Feige and Goemans later presented an $0.859$-approximation algorithm for \maxdicut~\cite{FG95}, building on the work of Feige and Lov\'asz~\cite{FL92}. Follow-up works by Matuura and Matsui~\cite{Matuura2003new} and by Lewin, Livnat, and Zwick~\cite{Lewin02:ipco} further improved the approximation factor. On the hardness side, the best inapproximability factor under \NP-hardness is $12/13\approx 0.923$~\cite{Trevisan00:sicomp} (again, via a gadget from a result in~\cite{Hastad01}).
In recent work, Brakensiek, Huang, Potechin, and Zwick gave an $0.874$-approximation algorithm for \maxdicut, also showing that the approximation factor is \UGC{}-optimal (up to a small delta)~\cite{Brakensiek23:focs}.

\paragraph*{Contributions}

Our main contribution is an SDP-based algorithm for \maxdicutcut{}. This algorithm achieves the best possible performance on the problem: given a directed graph $G$ that has a directed cut of value $\rho$, it finds a cut in $G$ (ignoring directions) of value at least $\rho$.\footnote{No better approximation ratio than~$1$ is possible. For example, given a directed graph $G$ with a directed cut of value $2 / 3$, it is not possible to find a cut of value greater than $2 / 3$ in general, since we might get the input graph $([5], \{ 1 \to 2, 3 \to 4, 5 \to 5 \})$, which doesn't have such a cut.}

For the following, we define $\dicut(G)$ to be the maximum value of any directed
cut in $G$, and $\cut(c)$ to be the value of a cut $c$ of $G$. (These and other notations will be defined fully in~\cref{sec:prelims}.) We prove our result in two steps. First we provide a randomised algorithm.

\begin{restatable}{theorem}{mainrand}\label{thm:mainrand}
    There is a randomised algorithm which, if given a directed graph $G$ and $\epsilon \in (0, 1)$, finds a cut $c$ in $G$ with expected value at least $\dicut(G) - \epsilon$
    in polynomial time in $|G|$ and $\log(1/\epsilon)$.
\end{restatable}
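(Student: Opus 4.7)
The plan follows the standard ``SDP plus rounding'' paradigm: I solve a semidefinite programming relaxation of $\dicut(G)$ approximately, then round the vector solution to a Boolean cut whose expected value matches (or exceeds) the SDP objective.

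First, I set up the vector-programming relaxation of \maxdicut{}: introduce unit vectors $v_0, v_1, \ldots, v_n$, where $v_0$ is a distinguished ``side-$+1$'' reference direction, and maximise
\[
\rho^\ast \;=\; \sum_{(u,v) \in E} \frac{1 + v_0 \cdot v_u - v_0 \cdot v_v - v_u \cdot v_v}{4}.
\]
Every integer dicut assignment is feasible (take $v_i \in \{+v_0, -v_0\}$), so $\rho^\ast \ge \dicut(G)$. I also impose the triangle inequalities $v_i \cdot v_j + v_j \cdot v_k \le 1 + v_i \cdot v_k$ together with their signed variants (replacing any subset of vectors by their negatives, including $v_0$), which tighten the relaxation without losing integer feasibility. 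Via the ellipsoid method, a feasible vector solution with objective at least $\rho^\ast - \epsilon / 2$ is computable in time polynomial in $|G|$ and $\log(1/\epsilon)$.

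Second, I apply a $v_0$-aware rounding scheme. Plain random hyperplane rounding is insufficient: if $v_u$ lies at angle $\pi/4$ and $v_v$ at angle $3\pi/4$ from $v_0$ in a common plane, the cut probability is $1/2$ but the per-edge SDP value equals $(1 + \sqrt{2})/4 > 1/2$. A natural candidate is a \thresh{}-style scheme: sample a standard Gaussian $r$ and set $x_u = \sgn(v_u \cdot r - t(v_0 \cdot v_u))$ for a carefully chosen threshold function $t$ (for instance $t(a) = \Phi^{-1}((1-a)/2)$, enforcing $\Pr[x_u = +1] = (1 + v_0 \cdot v_u)/2$). Alternatively, a \rot{}-style scheme first rotates each $v_u$ within the $(v_0, v_u)$-plane based on $v_0 \cdot v_u$ and then applies random hyperplane rounding. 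Either way, the goal is to establish the per-edge pointwise inequality
\[
\Pr[x_u \ne x_v] \;\ge\; \frac{1 + v_0 \cdot v_u - v_0 \cdot v_v - v_u \cdot v_v}{4}.
\]
Summing over edges and using linearity of expectation then yields $\Exp[\cut(\vx)] \ge \rho^\ast - \epsilon/2 \ge \dicut(G) - \epsilon$, proving the theorem.

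The main obstacle is verifying the pointwise inequality. Writing $a = v_0 \cdot v_u$, $b = v_0 \cdot v_v$, $c = v_u \cdot v_v$, it becomes a three-parameter bivariate-Gaussian (for \thresh{}) or arccosine (for \rot{}) inequality over the PSD-feasible region $\{(a, b, c) : 1 + 2abc - a^2 - b^2 - c^2 \ge 0\}$, further constrained by the triangle inequalities (such as $a + b - c \le 1$). These constraints rule out pathological cases: for example, $a = 0.9$, $b = 0.436$, $c = 0$ is PSD-feasible but violates the triangle inequalities, and naive \thresh{} rounding fails on it. Equality is forced at the extremal integer configurations ($v_u = v_0, v_v = -v_0$; $v_u = v_v$; $v_u = -v_v \perp v_0$), and these boundary cases tightly pin down the rounding parameters. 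Establishing the inequality throughout the interior of the feasible region is the chief technical challenge, likely requiring a careful analytic argument and possibly a suitable mixture of rounding strategies.
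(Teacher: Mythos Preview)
Your high-level plan matches the paper's exactly: solve the SDP with triangle inequalities, then round so that the per-edge inequality $\Pr[c(u) \neq c(v)] \geq (1 - \vx_u \cdot \vx_v + \vt \cdot \vx_v - \vt \cdot \vx_u)/4$ holds, and conclude by linearity of expectation. However, you have not actually proved the theorem: you list two candidate rounding schemes (a \thresh{} with $t(a) = \Phi^{-1}((1-a)/2)$, or an unspecified \rot{}), observe that a naive choice already fails on a PSD-feasible configuration, and then explicitly defer the ``chief technical challenge'' of verifying the inequality. That verification \emph{is} the content of the theorem --- the SDP and the linearity-of-expectation wrapper are standard --- so what you have written is a plan, not a proof. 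In particular, neither of your two candidate schemes is shown to work, and the paper's scheme is neither of them.

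The paper's rounding is concrete and worth knowing. Sample a threshold $a$ uniformly from $[0,1]$; if $|\vx_u \cdot \vt| \geq a$ set $c(u) = \sgn(\vx_u \cdot \vt)$; otherwise project $\vx_u$ onto $\langle \vt \rangle^\perp$, normalise to $\vy_u$, and set $c(u) = H(\vy_u)$ for a uniformly random hyperplane $H$. This is a \emph{distribution} over \rot{} schemes (one step-function rotation per value of $a$), not a single one, which is exactly the ``suitable mixture'' you anticipated. The per-edge inequality is then a separate Configuration Lemma: with $x = \vx_u \cdot \vt$, $y = \vx_v \cdot \vt$, $z = \vx_u \cdot \vx_v$, symmetry reduces to the case $|x| \leq y$; the cut probability splits into three explicit pieces according to whether $a \leq |x|$, $|x| < a \leq y$, or $a > y$; and everything boils down to the scalar inequality $\frac{1-y}{\pi}\arccos\rho_{xyz} \geq \frac{1 - z + x - y}{4}$, where $\rho_{xyz} = (z - xy)/\sqrt{(1-x^2)(1-y^2)}$ is the inner product of the projected vectors. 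This last inequality is then minimised analytically in $z$ over the triangle-inequality interval $[x+y-1,\,1-y+x]$ and checked at the endpoints and the unique interior critical point. The triangle inequalities are genuinely needed (the paper exhibits $x=0.03$, $y=0.99$, $z=-0.01$ as a counterexample without them), confirming your intuition --- but the analysis still has to be carried out.
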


Second, we derandomise the algorithm from~\cref{thm:mainrand}.

\begin{restatable}{theorem}{mainunrand}\label{thm:mainunrand}
    There is a deterministic algorithm which, if given a directed graph $G$, finds 
    a cut $c$ of $G$ with $\cut(c) \geq \dicut(G)$ in polynomial time in $|G|$.
\end{restatable}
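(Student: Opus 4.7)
The plan is to derive~\cref{thm:mainunrand} from~\cref{thm:mainrand} via two small conceptual steps: first exploit integrality of $\cut$ to eliminate the $\epsilon$ slack, and then derandomize the algorithm of~\cref{thm:mainrand} by the method of conditional expectations. For the first step I would instantiate~\cref{thm:mainrand} with any fixed $\epsilon \in (0,1)$, for instance $\epsilon = 1/2$. Since both $\cut(c)$ and $\dicut(G)$ are integers and $\Exp[\cut(c)] \geq \dicut(G) - 1/2$, some outcome in the support of the algorithm's output distribution achieves $\cut(c) \geq \dicut(G)$. It therefore suffices to identify deterministically one outcome whose cut value is at least the overall expectation.

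For the second step I would view the randomness of the algorithm as a finite, polynomially long sequence of bits $b_1, \dots, b_N$ and process them greedily: at the $i$-th step, fix $b_i$ to whichever value $\beta \in \{0,1\}$ maximises
\[
  \Exp\bigl[\cut(c) \bigm| b_1, \dots, b_{i-1}, b_i = \beta\bigr].
\]
By the elementary inequality $\max(a,b) \geq (a+b)/2$, this conditional expectation is non-decreasing along the process. After all $N$ bits have been fixed, the cut $c$ is completely determined and satisfies $\cut(c) \geq \Exp[\cut(c)] \geq \dicut(G) - 1/2$, so by integrality $\cut(c) \geq \dicut(G)$.

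The main obstacle is the efficient evaluation of each conditional expectation. Writing $\cut(c) = \sum_{e \in E(G)} \mathbf{1}[e \text{ is cut by } c]$, this expectation decomposes as a sum over edges of the conditional probability that each edge is cut, so it is enough to evaluate these per-edge probabilities in polynomial time. For an SDP-based rounding scheme of the kind suggested by the \thresh{} and \rot{} notation set up in the preamble, the dominant source of continuous randomness is typically a Gaussian or rotation vector against which the SDP vectors are projected; the polynomial dependence on $\log(1/\epsilon)$ in~\cref{thm:mainrand} indicates that this randomness has already been discretised to polynomially many bits, making each per-edge probability a finite sum that can be evaluated directly. As a fallback, if the randomness does not decompose cleanly, one can instead enumerate a polynomial-sized set of candidate cuts produced by sweeping the rounding parameters through their $O(\mathrm{poly}(|G|))$ combinatorially distinct values and return the best, since~\cref{thm:mainrand} already guarantees that at least one of them achieves value $\geq \dicut(G)$.
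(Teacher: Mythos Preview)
Your overall architecture---derandomise, then use integrality to remove the slack---matches the paper, but two concrete gaps keep the proposal from being a proof.

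First, $\cut(c)$ and $\dicut(G)$ are \emph{not} integers; by definition they lie in $\{0,1/|E|,2/|E|,\ldots,1\}$. So instantiating \cref{thm:mainrand} with $\epsilon=1/2$ gives nothing useful. You need $\epsilon$ on the order of $1/|E|$ (the paper takes total additive error $1/(2|E|)$), and the integrality step is $k'/|E|\geq k/|E|-1/(2|E|)\Rightarrow k'\geq k$.

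Second, your reading of the $\log(1/\epsilon)$ dependence is off: that factor comes from the \emph{SDP solver's accuracy}, not from any discretisation of the rounding randomness. The rounding in \cref{thm:mainrand} uses a genuinely continuous threshold $a\in[0,1]$ and a genuinely continuous Gaussian hyperplane $H$; neither has been reduced to polynomially many bits, so the per-edge conditional probabilities after fixing a prefix of ``bits'' are not well-defined as written. Your fallback is in fact the right idea, and the paper executes it explicitly: observe that only the order of $a$ relative to the finitely many values $|\vx_u\cdot\vt|$ matters, so there are $O(|V|)$ combinatorially distinct choices for $a$; try them all. For the best $a$, the remaining randomness is exactly a Goemans--Williamson hyperplane rounding of the vectors $\vz_u$ (equal to $\vy_u$ or $\pm\vt$ depending on the threshold), and the paper then invokes a known black-box derandomisation of that step~\cite{Mahajan99:sicomp,BK05}. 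Deran\-domising GW is itself nontrivial---it does not reduce to a naive bit-by-bit conditional-expectations argument---so citing or reproducing that machinery is essential, not optional.
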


We note that the guarantee of the algorithm from~\cref{thm:mainunrand} does not have an
$\epsilon$ and is exact. Indeed, as the value of a cut is a number of the form $k / |E(G)|$, this can be achieved by
setting the error parameter in~\cref{thm:mainrand} small enough. 

\smallskip
Our algorithm in~\cref{thm:mainunrand} uses a \thresh{} rounding scheme (indeed, our rounding
scheme, as we shall discuss later, can  be seen as a distribution of \rot{} rounding
schemes --- these are a subfamily of the \thresh{} family). The \thresh{} rounding scheme was introduced in~\cite{Lewin02:ipco}, whereas the \rot{} rounding scheme was first introduced in~\cite{FG95}, but named in~\cite{Lewin02:ipco}.
They have been used ever since then; for instance,
recently to create an (almost) optimal
algorithm for \maxdicut{}~\cite{Brakensiek23:focs}.
That we use such a rounding scheme is not surprising: it is known by a slight generalisation of~\cite{Austrin10} established in~\cite{Brakensiek23:focs} that, assuming \UGC{} and Austrin's \emph{positivity conjecture}~\cite[Conjecture 1.3]{Austrin10}, the optimal algorithm for approximately solving \emph{any} Boolean 2-CSP uses such a rounding scheme; and this result can be easily extended to promise CSPs.
The challenge is specifying precisely \emph{which} rounding scheme in the
\thresh{} family to use: each member of this family is specified by an enormous amount of information. This family of rounding schemes is extremely
vast, including the state-of-the-art algorithms for, e.g., \maxcut{}~\cite{GW95}, \maxdicut{}~\cite{Brakensiek23:focs}, \maxtwosat{}~\cite{Lewin02:ipco} and \maxtwoand{}~\cite{Brakensiek23:focs}.
We remark that, whereas
many of the algorithms using such a rounding scheme have a numerical
flavour~\cite{Lewin02:ipco,Brakensiek23:focs}, i.e., their rounding scheme is
often given by a distribution over a set of functions that linearly interpolate
some points, our algorithm has a more analytic flavour. We also find it more
natural and self-contained to explain the rounding scheme by itself, rather than
as an instance of the \thresh{} family; thus we will take this direct,
self-contained approach.

We now give a brief, informal overview of our algorithm from~\cref{thm:mainrand}.
Given a directed graph $G = (V, E)$ with a directed cut of value $\rho$, we will try to place each vertex $u \in V$ at a point $\vx_u \in \R^N$, for some large enough $N$, such that $|| \vx_u ||_2^2 = 1$. In $\R^N$ we select a distinguished direction, given by a unit vector $\vt$. We will place these points so as to maximise an objective function that is a sum, over $(u, v) \in E$, of a linear combination between $|| \vx_u - \vx_v ||_2^2 = 2 - \vx_u \cdot \vx_v$ and $\vt (\vx_v - \vx_u)$; in addition the placement must satisfy some extra constraints known as the ``triangle inequalities''. This the standard SDP relaxation of the problem with triangle inequalities. The main novelty comes from our rounding procedure and its proof of correctness: if a vertex $u$ has $| \vx_u \cdot \vt |$ large, then we round it to $\sgn(\vx_u \cdot \vt)$; whereas if it has $|\vx_u \cdot \vt|$ small, we round it according to the sides of a uniformly random hyperplane (as in the Goemans-Williamson rounding~\cite{GW95}). 

The justification for this idea is the following. The vector $\vt$ represents the value $+1$ in our SDP.\@ Thus, if $\vx_u$ is very close to $\vt$ or $-\vt$ i.e., if $| \vx_u \cdot \vt|$ is large, then we ought to prioritise the $\vt$ component in $\vx_u$ and round according to $\sgn(\vx_u \cdot \vt)$. Conversely, consider an edge $(u, v)$ where $|\vx_u \cdot \vt|$ and $|\vx_v \cdot \vt|$ are both small. Then the contribution of $\vt (\vx_v - \vx_u)$ to the objective will likewise be small, and we care more about the contribution of $|| \vx_u - \vx_v||_2^2$. But this is precisely the contribution given by the vectors in the SDP relaxation for \maxcut{}; we intuitively want, therefore, to round as dictated by the best known algorithm~\cite{GW95}, i.e., to round according to a random hyperplane. In the case where we mix the sizes of $|\vx_u \cdot \vt|$ and $|\vx_v \cdot \vt|$, it can be seen that the contribution of the edge to the objective is approximately $1 / 2$. Also, the probability of properly cutting the edge is  $1 / 2$. We have not specified the meaning of ``large'' or ``small'' in this description: it turns out to be sufficient to simply select the threshold between ``large'' and ``small'' values uniformly at random in $[0, 1]$!

To help visualise this algorithm, we briefly explain it as a distribution of \rot{} rounding schemes. In such a rounding scheme, one specifies a \emph{rotation function} $f : [0, \pi] \to [0, \pi]$, and then considers every vector $\vx_u$. One then rotates $\vx_u$ to $\vx_u'$ within the plane spanned by $\vx_u$ and $\vt$ so that $\arccos(\vx_u' \cdot \vt) = f(\arccos(\vx_u' \cdot \vt))$, and finally rounds according to the sides of a random hyperplane. Our rounding scheme defines $f_a : [0, \pi] \to [0, \pi]$ as in~\cref{fig:rotation}, selects $a$ uniformly at random in $[0, 1]$, then rounds according to the \rot{} family of rounding schemes with rotation function $f_a$.
By analysing this rounding scheme carefully, we then determine that it has the advertised performance. We contain the key required bound of this analysis within~\cref{lem:bound}, stated and proved in~\cref{sec:analysis}.

\begin{figure}
\centering
\caption{Rotation function $f_a$}\label{fig:rotation}
\begin{tikzpicture}[>=stealth]
    \begin{axis}[
        xmin=-.2,xmax=pi+0.4,
        ymin=-.2,ymax=pi+0.4,
        axis x line=middle,
        axis y line=middle,
        axis line style=<->,
        xtick={0, 0.75, pi-.75, pi},
        xticklabels={0, $\arccos a$, $\pi-\arccos a$, $\pi$},
        ytick={0, pi/2, pi},
        yticklabels={0, $\pi / 2$, $\pi$},
        xlabel={$x$},
        ylabel={{\color{blue}$f_a(x)$}},
        grid=both
        ]
        \addplot[blue,very thick] coordinates{
        (0, 0)
        (.75, 0)
        };
        \addplot[blue,very thick,dashed] coordinates{
        (.75, 0)
        (.75, pi/2)
        };
        \addplot[blue,very thick] coordinates{
        (.75, pi/2)
        (pi-.75, pi/2)
        };
        \addplot[blue,very thick,dashed] coordinates{
        (pi-.75, pi/2)
        (pi-.75, pi)
        };
        \addplot[blue,very thick] coordinates{
        (pi-.75, pi)
        (pi, pi)
        };
    \end{axis}
\end{tikzpicture}
\end{figure}

We finish with mentioning a related (and in some sense dual) problem, which we call the \cutorient{} problem.
Suppose we are given an undirected graph $G$ which has a maximum cut of value
$\rho$. The problem asks us to orient the edges of $G$ so that the resulting
oriented graph has a directed cut of as large a value as possible. Observe that
this problem has a simple algorithm which produces an oriented graph with a
directed cut of size $\alpha_{\text{GW}} \rho$: find a cut in $G$ of size
$\alpha_{\text{GW}} \rho$ using the Goemans-Williamson algorithm, then orient
the edges according to that cut. Our deterministic algorithm for \maxdicutcut{}
shows that, assuming \UGC{}, nothing better is possible as our algorithm reduces \maxcut{} to \cutorient{}.

\paragraph*{Related work}

\maxcut{} is one of the simplest examples of a Constraint Satisfaction Problems (CSPs)~\cite{KSTW00}. A  celebrated result of Raghavendra showed that the basic SDP relaxation gives an optimal approximation algorithm to every CSP (assuming \UGC{})~\cite{Raghavendra08:everycsp}. While a breakthrough result, there are some caveats as as already discusses earlier in this Introduction and also discussed in more detail, e.g., in~\cite{MakarychevM17,Brakensiek23:focs}.
In particular, the universal rounding scheme of Raghavendra can only go arbitrarily close to the integrality gap (in polynomial time), not match it exactly --- thus it would only be able to have an approximation ratio of $1 - \epsilon$ for any fixed $\epsilon$, not exactly 1 as in our algorithm here. The exact performance of this rounding scheme is also nontrivial to find e.g.~separating the optimal approximation ratio for \maxcut{} and \maxdicut{} was only recently shown in~\cite{Brakensiek23:focs}.
Indeed, the algorithm of Raghavendra and Steurer~\cite{Raghavendra09:focs} computes a gap instance for an SDP to error $\epsilon$ in \emph{doubly exponential time in $\poly(1 / \epsilon)$}.

The studied \maxdicutcut{} problem falls within the framework of so-called valued promise CSPs, recently investigated systematically by Barto et al.~\cite{bbkvz24:arxiv}. Another valued promise CSP recently studied is the \maxcolour{} problem~\cite{nz23:arxiv}, in which one seeks an $\ell$-colouring of a given graph $G$ with as many edges coloured properly as possible given that a $k$-colouring of $G$ exists with many edges coloured properly.

\section{Preliminaries}\label{sec:prelims}

\paragraph*{Notation} We use the Iverson bracket $[\phi]$, which is 1 when $\phi$ is true and 0 otherwise. We write $[n] = \{1, \ldots, n\}$. 
Whenever we write an expectation such as $\Exp_{x \in X}[f(x)]$, we implicitly mean that $x$ is uniformly distributed in $X$. 

We shall use two trigonometric functions: the cosecant function, defined by
$\csc(x) = 1 / \sin(x)$, and its inverse, denoted by $\arccsc$. Recall that
$\arccsc$ is decreasing and, for $x \geq 1$, we have $0 < \arccsc(x) \leq \pi / 2$.

A \emph{hyperplane} is a function $H : \R^n \to \{ -1, 0, 1 \}$ of the form
$H(\vx) = \sgn(\vx \cdot \va)$.
A \emph{uniformly random} hyperplane is one in which $\va$ is selected from a rotationally symmetric distribution uniformly at random e.g.~from the standard multivariate normal distribution. It is well known (cf.~\cite{GW95})
that $\Pr_{H}[ H(\vx) \neq H(\vy) ] = (1 / \pi) \arccos(\vx \cdot \vy)$ and that $\Pr_H[ H(\vx) = 0 ] = 0$. It is not difficult to see that this is the case even if we condition that, for some fixed vector $\vt$, we have $H(\vt) = 1$ (this just standardises which side of $H$ is $+1$ and which is $-1$).

For a vector $\vx \in \mathbb{R}^N$, we let $\langle \vx \rangle \subseteq \mathbb{R}^N$ denote the subspace spanned by $\vx$. For a linear subspace $S \subseteq \mathbb{R}^N$, we let $S^\perp$ denote the subspace of vectors perpendicular to $S$.

\paragraph*{Problem definitions} Consider a directed graph $G = (V, E)$. A
\emph{cut} in  $G$ is a partition of $V$ into two parts $A, B$. The \emph{undirected value} of this cut is $| (A \times B \cup B \times A) \cap E | / |E|$. The \emph{directed value} of the cut is $|(A \times B) \cap E| / |E|$. In other words, the directed value of a cut is the proportion of edges from $A$ to $B$, and the undirected value of a cut is the proportion of edges between $A$ and $B$ ignoring orientation.

While these definitions are the most natural, it will be  more convenient (and is standard in the literature) to work with the following equivalent ones instead. We define, for $x, y \in \R$,
\begin{align*}
\cut(x, y) & = \frac{1 - xy}{2}, \\
\dicut(x, y) & = \frac{1 - xy + y - x}{4}.
\end{align*}
For a graph $G = (V, E)$, a cut is a function $c : V \to \{ \pm 1 \}$. The undirected value of $c$ is
\[
\cut(c) = \Exp_{(u, v) \in E} [ \cut(c(u), c(v))] = \frac{1}{2} \Exp_{(u, v) \in E}[1 - c(u) c(v)].
\]
The directed value of $c$ is
\[
\dicut(c) = \Exp_{(u, v) \in E}[ \dicut(c(u), c(v))] = \frac{1}{4} \Exp_{(u, v) \in E}[1 - c(u) c(v) + c(v) - c(u)].
\]

Now, we define the (directed/undirected) value of a directed graph $G$ as follows
\begin{align*}
    \cut(G) & = \max_{c : V \to \{ \pm 1\} } \cut(c). \\
    \dicut(G) & = \max_{c : V \to \{ \pm 1\} } \dicut(c).
\end{align*}

The \maxcut{} and \maxdicut{} problems are simple to describe: given a directed
graph $G$, find or approximate $\cut(G)$ or $\dicut(G)$ respectively.\footnote{In the way we have defined our problems, the instance of \maxcut{} is a
directed graph, but the directions are arbitrary and do not matter for the
problem.} The problem we are interested in is a promise problem bridging the gap between the two, which we now define.

\begin{definition}
    The \maxdicutcut{} problem is defined as follows. Given a directed graph $G = (V, E)$, for which $\dicut(G) = \rho$, find a cut $c : V \to \{ \pm 1 \}$ such that $\cut(c) \geq \rho$.
\end{definition}

We observe that such a cut must exist; indeed, a cut $c$ witnessing  $\dicut(G) = \rho$
works as $\dicut(c) \leq \cut(c)$ for every cut $c$.

\paragraph*{Semidefinite programming}

We derive the basic SDP relaxation for \maxdicut{}, including triangle constraints. This SDP has been used many times before, e.g.~in~\cite{Brakensiek23:focs}, and was first used in~\cite{FG95}.

\maxdicut{} can be formulated as a quadratic program, as follows. We introduce variables $x_v \in \R$ for $v \in V$; then we solve the following maximisation problem.

\begin{maxi*}|l|
{x_v}{
\Exp_{(u, v) \in E}\left[\dicut(x_u, x_v)\right]
= \Exp_{(u, v) \in E}\left[\frac{1 - x_u x_v + x_v - x_u}{4}\right]}
{}{}
\addConstraint{x_v^2 = 1\qquad}{}{\forall v \in V}
\end{maxi*}

We change this program in two ways. First, we add the so-called triangle
constraints (which hold for $x_v \in \{ \pm 1\}$), to get the following:

\begin{maxi}|l|
{x_v}{
\Exp_{(u, v) \in E}\left[\frac{1 - x_u x_v + x_v - x_u}{4}\right]}
{}{}\label{qp1}
\addConstraint{x_v^2 = 1}{}{\forall v \in V}
\addConstraint{(1 \pm x_u)(1 \pm x_v) \geq 0\qquad}{}{\forall (u, v) \in E}
\end{maxi}

The main problem with this formulation is that it is not homogeneous i.e.~there are linear terms. In order to fix this, we introduce a variable $x_0 \in \{ \pm 1 \}$,\footnote{Assume without loss of generality that $0 \not \in V$.} with which we multiply all linear factors. The solutions to this new program will correspond to the solutions to the old program~\eqref{qp1}, where the correspondence is given by dividing by $x_0$ or by setting $x_0 = 1$.

\begin{maxi}|l|
{x_v, t}{
\Exp_{(u, v) \in E}\left[\frac{1 - x_u x_v + x_0 x_v - x_0 x_u}{4}\right]}
{}{}\label{qp2}
\addConstraint{x_v^2= 1}{}{\forall v \in V \cup \{ 0 \}}
\addConstraint{(x_0 \pm x_u) (x_0 \pm x_v) \geq 0\qquad}{}{\forall (u, v) \in V}
\end{maxi}

Unfortunately,~\eqref{qp2} is a quadratic programming problem, which is \NP-hard to solve in general (and in particular, solving this SDP would solve \maxdicut{}, which is not possible assuming $\P \neq \NP$). Thus we
get the canonical SDP relaxation by allowing $x_u$ to be vectors in an
arbitrarily high-dimensional vector space, replacing products between real
numbers with inner products. From the way we have derived this program it is clear that its value is at least as large as $\dicut(G)$.

\begin{maxi}|l|
{\vx_v, \vt}{
\Exp_{(u, v) \in E}\left[\frac{1 - \vx_u \cdot \vx_v + \vt \cdot \vx_v - \vt \cdot \vx_u}{4}\right]}
{}{}\label{sdp1}
\addConstraint{|| \vx_v ||^2 = 1}{}{\forall v \in V \cup \{ 0 \}}
\addConstraint{(\vt \pm \vx_u) \cdot (\vt \pm \vx_v) \geq 0\qquad}{}{\forall (u, v) \in E}
\end{maxi}

This semidefinite program can be solved with error at most $\epsilon \in (0, 1)$ in
polynomial time in $|G|$ and $\log(1/\epsilon)$, either by the ellipsoid method~\cite{GrotschelLS81:combinatorica} or by interior point methods~\cite{KlerkV16:siamjo}.\footnote{For simplicity we will ignore issues of real precision.}

\section{Results}

In this section, we prove our main results.

\mainrand*
\mainunrand*

In order to prove this, we will need the following technical lemma, which is proved in~\cref{sec:analysis}.

\begin{restatable}[Configuration lemma]{lemma}{config}\label{lem:config}
Suppose we have three unit vectors $\vx_u, \vx_v, \vt \in \R^N$ such that $(\vt \pm \vx_u) \cdot (\vt \pm \vx_v) \geq 0$. Let $\vy_u, \vy_v$ be the normalised projection of $\vx_u, \vx_v$ onto $\langle \vt \rangle^\perp$ (if $\vx_w \in \langle \vt \rangle$ then set $\vy_w$ to be some vector perpendicular to all other vectors in the problem). Suppose $a$ is selected uniformly at random from $[0, 1]$, and $H$ is a hyperplane selected uniformly at random such that $H(\vt) = 1$. Suppose $c(u), c(v)$ are defined as follows: $c(w) = 1$ if $\vx_w \cdot \vt \geq a$ or if $\vx_w\cdot \vt > -a$ and $H(\vy_w) = 1$; otherwise $c(w) = -1$. Then, we have
\[
\Pr_{a, H}[ c(u) \neq c(v) ] \geq \frac{1 - \vx_u \cdot \vx_v + \vt \cdot \vx_v - \vt \cdot \vx_u}{4}.
\]
\end{restatable}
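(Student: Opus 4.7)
My plan is to compute $\Pr_{a, H}[c(u) \neq c(v)]$ in closed form and then reduce the claim to a purely analytic inequality in three real parameters, which is precisely the content of $\cref{lem:bound}$. Write $\alpha = \vx_u \cdot \vt$, $\beta = \vx_v \cdot \vt$, $\gamma = \vx_u \cdot \vx_v$ and $\theta = \arccos(\vy_u \cdot \vy_v)$. Decomposing $\vx_w = (\vx_w \cdot \vt)\vt + \sqrt{1 - (\vx_w \cdot \vt)^2}\,\vy_w$ gives the identity $\gamma = \alpha\beta + \cos\theta\sqrt{(1-\alpha^2)(1-\beta^2)}$, so the target $(1-\gamma+\beta-\alpha)/4$ is an explicit function of $(\alpha,\beta,\theta)$; the triangle inequalities $(\vt \pm \vx_u)\cdot(\vt \pm \vx_v) \geq 0$ expand to four linear constraints on $\alpha,\beta,\gamma$ which carve out the region on which the inequality must hold.

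Next I condition on $a$. For a single vertex $w$ the rule simplifies to: $c(w) = \sgn(\vx_w\cdot\vt)$ when $a \leq |\vx_w\cdot\vt|$, and $c(w) = H(\vy_w)$ otherwise. This partitions $[0,1]$ into three intervals, of lengths $\min(|\alpha|,|\beta|)$, $\bigl||\alpha|-|\beta|\bigr|$ and $1-\max(|\alpha|,|\beta|)$. On the first, both sides are deterministic and $c(u)\neq c(v)$ iff $\sgn(\alpha)\neq \sgn(\beta)$. On the third, both sides are of the form $H(\vy_w)$ with $\vy_w \in \langle \vt\rangle^\perp$; since conditioning on $H(\vt)=1$ only fixes the sign of the $\vt$-component of the hyperplane normal and leaves the restriction of $H$ to $\langle \vt\rangle^\perp$ distributed as a uniformly random hyperplane there, the Goemans--Williamson formula gives disagreement probability $\theta/\pi$. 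On the middle interval, the same observation shows that the random side is a uniform $\pm 1$ regardless of the deterministic side, so the disagreement probability is $1/2$. Integrating over $a \in [0,1]$ yields
\begin{align*}
\Pr_{a,H}[c(u)\neq c(v)] = \min(|\alpha|,|\beta|)\,[\sgn(\alpha)\neq \sgn(\beta)] + \tfrac{1}{2}\bigl||\alpha|-|\beta|\bigr| + \bigl(1-\max(|\alpha|,|\beta|)\bigr)\,\tfrac{\theta}{\pi}.
\end{align*}

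It remains to verify that this explicit expression dominates $(1-\gamma+\beta-\alpha)/4$ throughout the triangle-inequality region; this is exactly the content of $\cref{lem:bound}$. The main obstacle is therefore this final analytic inequality, which is why it is isolated as a separate lemma; its proof will presumably split on the signs of $\alpha,\beta$ and estimate the $\theta/\pi$-contribution against $\cos\theta$ trigonometrically. The degenerate cases $|\alpha| \in \{0,1\}$ cause no difficulty: when $|\alpha|=1$ the $\theta$-weighted term has weight zero and the arbitrary choice of $\vy_u$ is irrelevant, and when $\alpha=0$ the $\sgn(\alpha)$-weighted term has weight zero so the convention for $\sgn(0)$ is also irrelevant.
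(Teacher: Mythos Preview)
Your approach is essentially the same as the paper's: condition on which of the three $a$-intervals occurs, use the Goemans--Williamson formula on the hyperplane interval, and reduce everything to an analytic inequality in the three inner products. Your unified closed form for $\Pr_{a,H}[c(u)\neq c(v)]$ is correct and coincides with the paper's two separate case expressions.

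There is one small gap. The inequality you are left with is \emph{not literally} the content of \cref{lem:bound}: that lemma is stated only under the hypothesis $|x|\le y$ and bounds just the hyperplane term, asserting $\tfrac{1-y}{\pi}\arccos\rho_{xyz}\ge\tfrac{1-z+x-y}{4}$. Two short steps, which the paper carries out inside the proof of \cref{lem:config} rather than inside \cref{lem:bound}, are needed to connect your formula to it. First, a symmetry reduction: your closed form is invariant under $(\alpha,\beta)\mapsto(\beta,\alpha)$ and $(\alpha,\beta)\mapsto(-\alpha,-\beta)$, while the target $(1-\gamma+\beta-\alpha)/4$ can only increase under the first and is preserved under their composition, so one may assume $|\alpha|\le\beta$. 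Second, in that regime your first two terms always collapse to $(\beta-\alpha)/2$ regardless of $\sgn(\alpha)$, and subtracting this from the target leaves exactly $(1-\gamma+\alpha-\beta)/4$, which is the right-hand side of \cref{lem:bound} with $(x,y,z)=(\alpha,\beta,\gamma)$. With those two observations inserted, your argument is complete and matches the paper's proof.
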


\begin{proof}[Proof of~\cref{thm:mainrand}]
    Suppose we are given an input graph $G$ with $\dicut(G) = \rho$. Then, we
    first solve the semidefinite program~\eqref{sdp1}, with error $\epsilon$;
    suppose that the solution vectors we get are $\vx_u \in \R^{N}$ for $u \in V \cup \{0 \}$, where $N$ can be taken to be equal to the number of vectors in the problem i.e.~$|V| + 1$.\footnote{Indeed, the SDP algorithm actually computes the inner products between all vectors $\vx_0, \vx_v$ --- the actual vectors can be reconstituted using a Cholesky decomposition i.e.~writing a positive semidefinite matrix $A \in \mathbb{R}^{n \times n}$ as a product $X^T X$ for $X \in \mathbb{R}^{n \times n}$. If $A$ is the matrix containing the inner products, then the columns of $X$ form the vectors $\vx_0, \vx_v$ in the solution.}
    The value of these vectors is at least $\rho - \epsilon$. Now, we randomly round
    this SDP solution into a solution to our original problem. We first intuitively explain our rounding scheme.

    Observe that the vector $\vt$ corresponds to $+1$ in our problem, whereas the vector $-\vt$ corresponds to $-1$. If some vector $\vx_u$ were equal exactly to $\vt$ or $-\vt$, then intuitively we should round it to $\pm 1$ accordingly. Now, similarly, if $| \vx_u \cdot \vt|$ is large enough, then we should also round $\vx_u$ according to $\sgn(\vx_u \cdot \vt)$. On the other hand, if $| \vx_u \cdot \vt |$ is small enough, then we may as well throw away the component of $\vx_u$ parallel to $\vt$. Thus, we set $\vy_u$ to be the normalised projection of $\vx_u$ onto $\langle \vt \rangle^\perp$. In the case $\vx_u \in \langle \vt \rangle$, then of course $| \vx_u \cdot \vt |$ is large, so the value we assign $\vy_u$ shouldn't matter; we set it to some vector perpendicular to all other vectors seen so far for simplicity. (This requires increasing the dimension of all the vectors by $|V|$, but this does not affect the complexity.)

    Consider now an edge $(u, v)$ where both $| \vx_u \cdot \vt|$ and $| \vx_v \cdot \vt |$ are small. In this case, we see that $\vx_v \cdot \vt - \vx_u \cdot \vt$ does not contribute much to the objective function. Indeed, the major contribution is by $1 - \vx_u \cdot \vx_v = 2\cut(\vx_u, \vx_v)$. Thus, it intuitively makes sense to round these vectors by a random hyperplane.

    We are now ready to describe precisely the operation of the algorithm. First we sample a threshold $a \in [0, 1]$ uniformly at random. (This threshold separates ``large'' from ``small'' values in the informal discussion above.) If $| \vx_u \cdot \vt | \geq a$ then set $c(u) = \sgn(\vx_u \cdot \vt)$. For all $u$, compute $\vy_u$ as described above, namely set
    \[
    \vy_u = \frac{\vx_u - (\vx_u \cdot t) \vt}{\sqrt{1 - (\vx_u \cdot \vt)^2}}
    \]
    whenever $|\vx_u \cdot \vt| < 1$, and otherwise $\vy_u$ is set perpendicular to all other vectors seen so far. Sample a hyperplane $H$ uniformly at random such that $H(\vt) = 1$. Now, for all $u$ for which $|\vx_u \cdot \vt| < a$, we set $c(u) = H(\vy_u)$.

    For any edge $(u, v) \in E$, what is the probability that this edge is correctly cut by our method? We observe that the rounding method we described is exactly the same as the one mentioned in~\cref{lem:config}, hence
    \[
    \Pr_{a, H}[c(u) \neq c(v)] \geq \frac{1 - \vx_u \cdot \vx_v + \vt \cdot \vx_v - \vt\cdot\vx_u}{4}.
    \]
    Thus, noting that $\Exp_{a, H}[\cut(c(u), c(v))] = \Pr_{a, H}[c(u) \neq c(v)]$, we have
    \[
    \Exp_{a, H}[ \cut(c) ] \geq \Exp_{(u, v) \in E}\left[
    \frac{1 - \vx_u \cdot \vx_v + \vt \cdot \vx_v - \vt\cdot\vx_u}{4} \right] \geq \rho - \epsilon,
    \]
    as required.
\end{proof}

We now derandomise our algorithm. This derandomisation scheme essentially uses the fact that our rounding scheme is a family of \rot{} rounding schemes i.e.~schemes that essentially rotate our vectors by some angle towards/away from $\vt$.

\begin{proof}[Proof of~\cref{thm:mainunrand}]
    We first slightly modify our algorithm. For every vector $\vx_u$ where $|\vx_u \cdot \vt| < a$ we set $\vz_u = \vy_u$; otherwise we set $\vz_u = \sgn(\vx_u \cdot \vt) \vt$.

    For any hyperplane where $H(\vt) = 1$ and any $u \in V$ where $|\vx_u \cdot t| \geq a$, we have that $H(\vz_u) = \sgn(\vx_u \cdot \vt)$. Thus, the following algorithm has the exact same characteristics as the algorithm from~\cref{thm:mainunrand}: compute an SDP solution $\vt, \vx_u$, then select a uniformly random threshold $a \in [0, 1]$ and compute $\vy_u, \vz_u$ as described. Now, select a uniformly random hyperplane $H$, with $H(\vt) = 1$. Finally, set $c(u) = H(\vz_u)$. This algorithm will in fact output exactly the same cut $c$ as our original one as long as the same values of $a, H$ are selected.

    Now, observe that the performance bound on our original algorithm says that
    the expected value of a cut $c$ created by selecting $a$ uniformly at random from $[0, 1]$ and setting $c(u) = H(\vz_u)$ is at
    least $\rho - \epsilon$. First, we observe that the exact value of $a$ doesn't matter, only its value relative to $|\vx_u \cdot \vt|$ for $u \in V$. Thus there are essentially $O(|V|)$ choices for $a$, and we can try all of them. Now, for the best $a$, we know that the expected value of the cut given by $c(u) = H(\vz_u)$ is at least $\rho - \epsilon$.
    It is well known that given such vectors $\vz_u$
    we can compute a colouring of $G$ with value at least $\rho - \epsilon -
    \epsilon'$ in polynomial time in $|G|$ and $\log(1/\epsilon')$ for $\epsilon \in (0, 1)$ --- indeed this amounts to derandomising the original Goemans-Williamson \maxcut{} algorithm. One classic algorithm for this is given in~\cite{Mahajan99:sicomp}, whereas a simpler and more optimised algorithm is given in~\cite{BK05}.

    Thus, by setting $\epsilon + \epsilon' = 1 / 2|E|$, we get a cut $c$ with
    value at least $\rho - 1 / 2|E| = \dicut(G) - 1 / 2|E|$. But observe that
    $\dicut(G) = k / |E|$, where $k$ is the number of edges in the optimal
    directed cut in $G$, and also $\cut(c) = k' / |E|$, where $k'$ is the number of edges properly cut by $c$. It follows therefore, as $k, k'\in \N$ and $k' / |E| \geq k / |E| - 1 / 2|E|$ i.e.~$k' \geq k - 1/2$ that $k' \geq k$, and hence $\cut(c) \geq \dicut(G)$.
\end{proof}

\section{Analysis}\label{sec:analysis}

In this section we prove the following lemma.

\config*

We define the value $\rho_{xyz}$. When $x = \pm 1$ or $y = \pm 1$, we define $\rho_{xyz} = 0$; otherwise,
\[
\rho_{xyz} = \frac{z - xy}{\sqrt{1 - x^2}\sqrt{1 - y^2}}.
\]
In order to prove~\cref{lem:config}, we will need the following fact.

\begin{lemma}\label{lem:bound}
    Consider $x, y \in [-1, 1]$ with $|x| \leq y$, and take $z \in [ x + y - 1, 1 - y + x ]$. Then
    \begin{equation}\label{eq:bound}
    \frac{1 - y}{\pi} \arccos \rho_{xyz} \geq \frac{1 - z + x - y}{4}.
    \end{equation}
\end{lemma}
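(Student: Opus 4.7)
The plan is to substitute $\alpha = \arccos \rho_{xyz}$, so that $z = xy + \sqrt{(1-x^{2})(1-y^{2})}\cos\alpha$, and then, after dividing both sides of~\eqref{eq:bound} by $(1-y)$ (the case $y=1$ is trivial because both sides vanish), reduce the lemma to the univariate inequality
\[
\frac{4\alpha}{\pi} + K\cos\alpha \;\geq\; 1+x, \qquad K := \sqrt{\frac{(1-x^{2})(1+y)}{1-y}},
\]
to be proved for $\alpha$ in the range $[\alpha_{\min},\alpha_{\max}]$ corresponding to $z\in[x+y-1,\,1-y+x]$. First I would verify two algebraic identities at the endpoints: $K\cos\alpha_{\min} = 1+x$ (from $z = 1-y+x$) and $K\cos\alpha_{\max} = -(1-x)$ (from $z = x+y-1$). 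These make the two endpoint cases immediate: at $\alpha_{\min}$ the inequality becomes $4\alpha_{\min}/\pi \geq 0$, and at $\alpha_{\max}$ it becomes $4\alpha_{\max}/\pi \geq 2$, which follows from $\cos\alpha_{\max} \leq 0$.

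The remaining work is to rule out interior minima of $F(c) := (4/\pi)\arccos c + Kc$ on $c\in[\cos\alpha_{\max}, \cos\alpha_{\min}]$. Computing $F'(c) = K - 4/(\pi\sqrt{1-c^{2}})$, if $K\leq 4/\pi$ the function is monotone decreasing and the endpoint analysis suffices. If $K>4/\pi$, the critical points are $c=\pm c_{0}$ with $c_{0}=\sqrt{1-16/(\pi^{2}K^{2})}$; a sign-of-$F''$ check shows that $-c_{0}$ is the unique local minimum, and it only requires attention when it actually lies inside the feasible interval, i.e.\ when $c_{0}\leq -\cos\alpha_{\max}$.

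The crux is then bounding $F(-c_{0})$. A short computation yields
\[
F(-c_{0}) \;=\; 4 \;-\; \frac{4}{\pi}\arcsin\!\bigl(4/(\pi K)\bigr) \;-\; \sqrt{K^{2} - 16/\pi^{2}},
\]
and I plan to control each subtracted term separately. The trivial bound $\arcsin(\cdot)\leq \pi/2$ makes the first at most $2$, and the feasibility condition $c_{0}\leq -\cos\alpha_{\max}$, combined with the identity $K\cos\alpha_{\max} = -(1-x)$, gives $\sqrt{K^{2}-16/\pi^{2}}=Kc_{0}\leq 1-x$. Summing the two bounds, $F(-c_{0}) \geq 4 - 2 - (1-x) = 1+x$, exactly as needed.

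The hard part will be recognising the correct univariate reduction and uncovering the two endpoint identities $K\cos\alpha_{\min}=1+x$, $K\cos\alpha_{\max}=-(1-x)$ up front; once these are in place the whole argument boils down to locating a single interior critical point and applying two clean one-line bounds to it. A minor bookkeeping subtlety is arguing that when $-c_{0}$ falls outside the feasible interval the minimum of $F$ must instead occur at an endpoint (by monotonicity of $F$ on each of $(-1,-c_{0})$, $(-c_{0},c_{0})$, $(c_{0},1)$) and so is already covered by the earlier endpoint bounds.
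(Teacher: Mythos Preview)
Your argument is correct and follows essentially the same skeleton as the paper's proof: check the two endpoints of the $z$-interval, locate the unique interior local minimum, and then use the feasibility constraint (that this minimum actually lies in the interval) together with the trivial bound $\arccsc\leq\pi/2$ to finish. The paper works in the variable $z$, introduces the auxiliary quantities $\Delta_{xy}$, $\alpha_{xy}=\tfrac{\pi}{4}K$ and a function $G(x,y)$, and proves $G(x,y)\geq 0$ via a somewhat indirect route: it translates the feasibility condition $xy-\sqrt{\Delta_{xy}}/(2\pi^2)\geq x+y-1$ into an upper bound on $y$ in terms of $x$, uses monotonicity of $\alpha_{xy}$ in $y$ to get $\alpha_{xy}\leq\sqrt{1+\tfrac{\pi^2}{16}(1-x)^2}$, and only then observes $\sqrt{\alpha_{xy}^2-1}\leq\tfrac{\pi}{4}(1-x)$. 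Your reparametrisation $c=\rho_{xyz}$ (equivalently $\alpha=\arccos\rho_{xyz}$) together with the two endpoint identities $K\cos\alpha_{\min}=1+x$ and $K\cos\alpha_{\max}=-(1-x)$ collapses this detour: the feasibility condition $-c_0\geq\cos\alpha_{\max}$ immediately gives $Kc_0\leq 1-x$, which is exactly the same bound the paper reaches after the algebra in its Lemma~8. So the two proofs are the same computation in different coordinates, with yours being noticeably more streamlined.
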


To prove this fact we define the following functions:
\begin{align*}
    F(x, y, z) & = \arccos \rho_{xyz} - \frac{\pi}{4} \frac{1 - z + x - y}{1 - y}, \\
    \Delta_{xy} & = 4 \pi^4 (1 - x^2)(1 - y^2) - 64 \pi^2 (1 - y)^2, \\
    \alpha_{xy} & = \frac{\pi}{4} \frac{\sqrt{1 - x^2}\sqrt{1 - y^2}}{1 - y}, \\
    G(x, y) &= \pi - \arccsc \alpha_{xy} - \frac{\pi}{4}(1 + x) - \sqrt{\alpha_{xy}^2 - 1}.
\end{align*}

For the function $G(x, y)$ to be defined (over the reals), it is necessary for $|\alpha_{xy}| \geq 1$;  in particular this is certainly true when $\Delta_{xy} \geq 0$, which is the only case when we will use this function. For many other of these functions, we require $y < 1$; this will also always be the case when they are invoked.

\begin{lemma}\label{lem:inexy}
    Assume $x, y \in (-1, 1)$, $y \geq |x|$, $\Delta_{xy} \geq 0$ and $xy - \frac{\sqrt{\Delta_{xy}}}{2\pi^2} \geq x + y - 1$. Then $G(x, y) \geq 0$.
\end{lemma}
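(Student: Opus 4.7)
The plan is to reduce the claim to a chain of two elementary estimates: one controlling $\sqrt{\alpha_{xy}^2 - 1}$ via the hypothesis, and one controlling $\arccsc \alpha_{xy}$ via the general bound on $\arccsc$ stated in the preliminaries. The key observation is that both explicit summands of $G(x,y)$ other than the constant $\pi$ and the linear term $\frac{\pi}{4}(1+x)$ admit clean upper bounds.

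First I would unpack $\Delta_{xy}$ in terms of $\alpha_{xy}$. A direct computation gives
\[
\alpha_{xy}^2 - 1 \;=\; \frac{\pi^2 (1-x^2)(1-y^2) - 16(1-y)^2}{16(1-y)^2} \;=\; \frac{\Delta_{xy}}{64\pi^2 (1-y)^2},
\]
so the hypothesis $\Delta_{xy} \geq 0$ is equivalent to $\alpha_{xy} \geq 1$ (both numerator and denominator of $\alpha_{xy}$ are positive by $y \in (-1,1)$ and $y \geq |x|$), guaranteeing $\arccsc \alpha_{xy}$ is defined, and yielding the identity $\sqrt{\alpha_{xy}^2 - 1} = \sqrt{\Delta_{xy}}/(8\pi(1-y))$.

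Next I would rearrange the second hypothesis $xy - \frac{\sqrt{\Delta_{xy}}}{2\pi^2} \geq x + y - 1$ into
\[
\sqrt{\Delta_{xy}} \;\leq\; 2\pi^2 (1-x)(1-y),
\]
where the right-hand side is positive. Substituting this into the formula from the previous step gives the clean estimate $\sqrt{\alpha_{xy}^2 - 1} \leq \pi(1-x)/4$.

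Finally I would plug everything into $G$. Using $\sqrt{\alpha_{xy}^2-1} \leq \pi(1-x)/4$ and the preliminaries' bound $\arccsc \alpha_{xy} \leq \pi/2$ (valid since $\alpha_{xy} \geq 1$), we get
\[
G(x,y) \;\geq\; \pi - \frac{\pi}{2} - \frac{\pi(1+x)}{4} - \frac{\pi(1-x)}{4} \;=\; \pi - \frac{\pi}{2} - \frac{\pi}{2} \;=\; 0,
\]
which is exactly what we want. There is no real obstacle here: the proof is essentially algebraic bookkeeping, and the only mild subtlety is matching the hypothesis $\sqrt{\Delta_{xy}} \leq 2\pi^2(1-x)(1-y)$ to the factor $1/(8\pi(1-y))$ so that $(1-y)$ cancels, leaving precisely the $\pi(1-x)/4$ needed to complement the $\pi(1+x)/4$ term in $G$.
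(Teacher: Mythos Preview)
Your proof is correct and is actually more direct than the paper's. Both arguments arrive at the same key estimate $\sqrt{\alpha_{xy}^2-1}\leq \tfrac{\pi}{4}(1-x)$ (equivalently $\alpha_{xy}\leq\sqrt{1+\tfrac{\pi^2}{16}(1-x)^2}$), and then finish with $\arccsc\alpha_{xy}\leq\pi/2$. The difference is in how this estimate is reached: the paper rationalises $(1-x)(1-y)-\tfrac{\sqrt{\Delta_{xy}}}{2\pi^2}\geq 0$ by multiplying by its conjugate, extracts from the resulting quadratic a bound on $y$ in terms of $x$, invokes monotonicity of $\alpha_{xy}$ in $y$, and substitutes back. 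You instead observe the identity $\sqrt{\alpha_{xy}^2-1}=\sqrt{\Delta_{xy}}/(8\pi(1-y))$ and plug the hypothesis $\sqrt{\Delta_{xy}}\leq 2\pi^2(1-x)(1-y)$ straight in, so that the factor $(1-y)$ cancels immediately. This shortcut bypasses the monotonicity step and the explicit formula for the extremal $y$, at no cost in rigor.
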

\begin{proof}
    We observe that for $xy - \frac{\sqrt{\Delta_{xy}}}{2\pi^2} \geq x + y - 1$ to be true, we must have
    \begin{multline*}
    0 \leq 1 + xy - x - y - \frac{\sqrt{\Delta_{xy}}}{2\pi^2} = (1 - x)(1 - y) - \frac{\sqrt{\Delta_{xy}}}{2\pi^2} = \\
    \frac{1}{(1-x)(1-y) + \frac{\sqrt{\Delta_{xy}}}{2\pi^2}}\left({(1 - x)^2(1-y)^2 - \frac{\Delta_{xy}}{4\pi^4} }\right) = \\
    \frac{2(1 - y)}{\pi^2(1-x)(1-y) + \frac{\sqrt{\Delta_{xy}}}{2}}\left(8(1 - y) - \pi^2(1 - x)(x + y)\right).
    \end{multline*}
    This is true if and only if $8(1 - y) \geq \pi^2(1 - x)(x + y)$. Thus $y \leq (\pi^2 x - \pi^2x^2 - 8)/(\pi^2 x -8 - \pi^2)$. 
    
    We observe now that $\alpha_{xy}$ is an increasing function of $y$ on $(-1, 1)$, hence by taking $y = (\pi^2 x - \pi^2 x^2 - 8)/(\pi^2 x -8 - \pi^2)$, we find that
    \[
    \alpha_{xy} \leq \sqrt{1 + \frac{\pi^2}{16}(1 - x)^2}.
    \]
    Furthermore, note that
    \[
    \frac{\partial}{\partial \alpha_{xy}} G(x, y) = - \frac{\sqrt{\alpha_{xy}^2 - 1}}{\alpha_{xy}} < 0.
    \]
    Hence we minimise $G$ by selecting $y$ so as to maximise $\alpha_{xy}$. Thus, taking $y$ so that $\alpha_{xy} = \sqrt{1 + \frac{\pi^2}{16}(1 - x)^2}$, we find that
    \begin{multline*}
    G(x, y) \geq \pi - \arccsc \left( \sqrt{1 + \frac{\pi^2}{16}(1 - x)^2}\right) - \frac{\pi}{4}(1 + x) - \sqrt{\sqrt{1 + \frac{\pi^2}{16}(1 - x)^2}^2 - 1} = \\
    \pi -
    \arccsc \left( \sqrt{1 + \frac{\pi^2}{16}(1 - x)^2}\right) 
     - \frac{\pi}{2} \geq 0,
    \end{multline*}
    as required.
\end{proof}

\begin{lemma}\label{lem:subst1}
    Assume $x, y \in (-1, 1)$ and $\Delta_{xy} \geq 0$. If $z = xy - \frac{\sqrt{\Delta_{xy}}}{2 \pi^2}$, then $\arccos \rho_{xyz} = \pi - \arccsc \alpha_{xy}$.
\end{lemma}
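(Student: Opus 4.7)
The proof is essentially a direct computation; the plan is to simplify $\rho_{xyz}$ under the given substitution, recognise the result as a cosine-of-complementary-cosecant expression, and verify the relevant principal ranges so that the inverse trig identity can be applied.

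First, I would substitute $z = xy - \sqrt{\Delta_{xy}}/(2\pi^2)$ directly into the definition of $\rho_{xyz}$, giving
\[
\rho_{xyz} \;=\; \frac{z - xy}{\sqrt{1-x^2}\sqrt{1-y^2}} \;=\; -\frac{\sqrt{\Delta_{xy}}}{2\pi^2\sqrt{1-x^2}\sqrt{1-y^2}}.
\]
Since $x,y \in (-1,1)$, the denominator is positive and nonzero, so this is well-defined. Squaring and using the definition of $\Delta_{xy}$ gives
\[
\rho_{xyz}^2 \;=\; \frac{4\pi^4(1-x^2)(1-y^2) - 64\pi^2(1-y)^2}{4\pi^4(1-x^2)(1-y^2)} \;=\; 1 - \frac{16(1-y)^2}{\pi^2(1-x^2)(1-y^2)}.
\]
The correction term is exactly $1/\alpha_{xy}^2$ by definition of $\alpha_{xy}$, so $\rho_{xyz}^2 = 1 - 1/\alpha_{xy}^2$, and keeping track of the negative sign above yields $\rho_{xyz} = -\sqrt{1 - 1/\alpha_{xy}^2}$.

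Next, I would invoke the standard identity for $\arccsc$. Writing $\theta = \arccsc\alpha_{xy}$, the hypothesis $\Delta_{xy}\geq 0$ together with the definition of $\alpha_{xy}$ gives $\alpha_{xy} \geq 1$, hence $\theta \in (0,\pi/2]$ with $\sin\theta = 1/\alpha_{xy}$ and $\cos\theta = +\sqrt{1 - 1/\alpha_{xy}^2}$ (the positive branch, as $\theta$ is in the first quadrant). Therefore
\[
\cos(\pi - \theta) \;=\; -\cos\theta \;=\; -\sqrt{1 - 1/\alpha_{xy}^2} \;=\; \rho_{xyz}.
\]

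Finally, to conclude $\arccos\rho_{xyz} = \pi - \theta$ I would check that $\pi - \theta$ lies in the principal range $[0,\pi]$ of $\arccos$: since $\theta \in (0,\pi/2]$, we have $\pi - \theta \in [\pi/2, \pi)$, which is in range and on which $\arccos \circ \cos$ is the identity. This gives the claimed equality. There is no real obstacle beyond bookkeeping of signs and ranges; the only places where one must be careful are the choice of the positive square root for $\cos\theta$ (justified by $\theta\in (0,\pi/2]$) and the fact that $\rho_{xyz}$ is nonpositive, which is why $\arccos\rho_{xyz}$ lands in $[\pi/2,\pi)$ rather than $[0,\pi/2]$.
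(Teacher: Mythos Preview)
Your proof is correct and follows essentially the same route as the paper: compute $\rho_{xyz} = -\sqrt{1 - 1/\alpha_{xy}^2}$ and then apply the identity $\arccos(-\sqrt{1 - 1/\alpha^2}) = \pi - \arccsc\alpha$. The only difference is cosmetic: the paper simply invokes this trigonometric identity, whereas you spell out its proof by checking the principal ranges, which is a welcome bit of extra care.
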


\begin{proof}
    Observe that $z - xy = - \frac{\sqrt{\Delta_{xy}}}{2\pi^2}$. So%
    \begin{multline*}
    \rho_{xyz}
    = -\frac{\sqrt{\Delta_{xy}}}{2\pi^2 \sqrt{1 - x^2} \sqrt{1 - y^2}}
    = -\frac{1}{2\pi^2} \sqrt{\frac{4\pi^4(1 - x^2)(1 - y^2) - 64\pi^2(1 - y)^2}{(1 - x^2)(1 - y^2)}} =\\
    -\frac{1}{2\pi^2}
    \sqrt{4 \pi^4 - 64\pi^2 \frac{(1 - y)^2}{(1 - x^2)(1 - y^2)}}
    = -\sqrt{1 - \frac{16}{
    \pi^2} \frac{(1 - y)^2}{(1 - x^2)(1 - y^2)}} = -\sqrt{1 - 1 / \alpha_{xy}^2}.
    \end{multline*}
    Thus recalling the trigonometric identity $\arccos (- \sqrt{1 - 1 / \alpha^2}) + \arccsc \alpha = \pi$, we have
    \[
    \arccos \rho_{xyz} = \arccos( - \sqrt{1 - 1 / \alpha_{xy}^2}) = \pi - \arccsc \alpha_{xy}.\qedhere
    \]
\end{proof}

\begin{lemma}\label{lem:subst2}
    Assume $x, y \in (-1, 1)$ and $\Delta_{xy} \geq 0$. If $z = xy - \frac{\sqrt{\Delta_{xy}}}{2\pi^2}$, then 
    \[
    \frac{\pi}{4} \frac{1 - z + x - y}{1 - y} = \frac{\pi}{4} (1 + x) + \sqrt{\alpha^2_{xy} - 1}.
    \]
\end{lemma}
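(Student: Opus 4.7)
The plan is to reduce the identity to pure algebra: substitute the given $z$ into the numerator on the left and rearrange the definitions of $\alpha_{xy}$ and $\Delta_{xy}$ to expose the square root.

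First I would expand the numerator. Plugging $z = xy - \frac{\sqrt{\Delta_{xy}}}{2\pi^2}$ into $1 - z + x - y$ gives $1 - xy + x - y + \frac{\sqrt{\Delta_{xy}}}{2\pi^2}$, which factors as $(1-y)(1+x) + \frac{\sqrt{\Delta_{xy}}}{2\pi^2}$. Dividing by $1-y$ (nonzero since $y < 1$) and multiplying by $\pi/4$ yields
\[
\frac{\pi}{4}\frac{1 - z + x - y}{1-y} = \frac{\pi}{4}(1+x) + \frac{\sqrt{\Delta_{xy}}}{8\pi (1-y)}.
\]
So the claim reduces to checking that $\frac{\sqrt{\Delta_{xy}}}{8\pi(1-y)} = \sqrt{\alpha_{xy}^2 - 1}$.

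Next I would relate $\Delta_{xy}$ to $\alpha_{xy}^2 - 1$. From the definition $\alpha_{xy}^2 = \frac{\pi^2(1-x^2)(1-y^2)}{16(1-y)^2}$, a direct computation gives $\alpha_{xy}^2 - 1 = \frac{\pi^2(1-x^2)(1-y^2) - 16(1-y)^2}{16(1-y)^2}$, whereas $\Delta_{xy} = 4\pi^2\bigl(\pi^2(1-x^2)(1-y^2) - 16(1-y)^2\bigr)$. Comparing the two, $\Delta_{xy} = 64\pi^2 (1-y)^2 (\alpha_{xy}^2 - 1)$. Taking square roots, and using $1-y > 0$, we get $\sqrt{\Delta_{xy}} = 8\pi (1-y) \sqrt{\alpha_{xy}^2 - 1}$, which gives the desired equality when divided by $8\pi(1-y)$.

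There is no real obstacle here; the argument is entirely routine manipulation, with the only things to watch being the sign conventions (using $\Delta_{xy} \geq 0$ to ensure $\alpha_{xy}^2 \geq 1$, and $1-y > 0$ to pull $(1-y)$ out of the square root without an absolute value).
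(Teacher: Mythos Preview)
Your proof is correct and follows essentially the same approach as the paper: substitute $z$, factor $1 - xy + x - y = (1+x)(1-y)$, and then simplify the remaining term by pulling the factor $1/(8\pi(1-y))$ inside the square root to recognise $\alpha_{xy}^2 - 1$. The only difference is cosmetic --- you state the relation $\Delta_{xy} = 64\pi^2(1-y)^2(\alpha_{xy}^2-1)$ explicitly, whereas the paper leaves it implicit in a single chain of equalities.
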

\begin{proof}
We have
\begin{multline*}
    \frac{\pi}{4} \frac{1 - z + x - y}{1 - y}
    =
    \frac{\pi}{4} \frac{1 - xy + x - y}{1 - y} + \frac{1}{8\pi(1 - y)} \sqrt{4\pi^4(1 - x^2)(1-y^2) - 64\pi^2(1-y)^2} = \\
    \frac{\pi}{4}\frac{(1 + x)(1 - y)}{1 - y} + \sqrt{\frac{\pi^2}{16}\frac{(1 - x^2)(1- y^2)}{(1 - y)^2} - 1} = \frac{\pi}{4}(1 + x)  + \sqrt{\alpha_{xy}^2 - 1},
\end{multline*}
as required.
\end{proof}

\begin{lemma}\label{lem:inexyz}
    Take $x, y \in (-1, 1)$ such that $|x| \leq y$, and $x + y - 1 \leq z \leq 1 - y + x$. Then $F(x, y, z) \geq 0$.
\end{lemma}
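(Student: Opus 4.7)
The plan is to minimise $F(x, y, z)$ in $z$ (with $x, y$ held fixed) over the interval $[x+y-1, 1-y+x]$, and verify nonnegativity at each resulting candidate minimum. First I would differentiate in $z$, obtaining
\[
\frac{\partial F}{\partial z} = -\frac{1}{\sqrt{(1-x^2)(1-y^2) - (z-xy)^2}} + \frac{\pi}{4(1-y)},
\]
which is maximised at $z = xy$, strictly decreases in $|z - xy|$, and vanishes precisely at $z_{\pm} := xy \pm \sqrt{\Delta_{xy}}/(2\pi^2)$ when $\Delta_{xy} \geq 0$. Hence, if $\Delta_{xy} \geq 0$, $F$ is decreasing on $(-\infty, z_-]$, increasing on $[z_-, z_+]$, and decreasing on $[z_+, \infty)$; whereas if $\Delta_{xy} < 0$, $F$ is strictly decreasing throughout its domain. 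Writing $L := x + y - 1$ and $R := 1 - y + x$, one has $xy - L = (1-x)(1-y) \geq 0$ and $R - xy = (1+x)(1-y) \geq 0$, hence $L \leq xy \leq R$, so $z_- \leq R$ whenever $z_-$ is defined. Consequently, the minimum of $F$ on $[L, R]$ must occur at $L$, at $R$, or at $z_-$ (the last case only when $\Delta_{xy} \geq 0$ and $z_- \geq L$).

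I would then evaluate $F$ at each endpoint. At $z = L$, a direct simplification gives $\rho_{xyL} = -\sqrt{(1-x)(1-y)/((1+x)(1+y))} \leq 0$, so $\arccos \rho_{xyL} \geq \pi/2$; meanwhile the linear term $\frac{\pi}{4}(1 - L + x - y)/(1-y)$ reduces to $\pi/2$, giving $F(x, y, L) \geq 0$. At $z = R$, the linear term vanishes and $\rho_{xyR} = \sqrt{(1+x)(1-y)/((1-x)(1+y))}$; the inequality $\rho_{xyR} \leq 1$ is equivalent to $x \leq y$, which holds because $|x| \leq y$, so $F(x, y, R) = \arccos \rho_{xyR} \geq 0$.

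Finally, the interior candidate $z = z_-$ only arises under the exact hypotheses of \cref{lem:inexy}. Applying \cref{lem:subst1} and \cref{lem:subst2} would yield
\[
F(x, y, z_-) = \bigl(\pi - \arccsc \alpha_{xy}\bigr) - \left(\frac{\pi}{4}(1+x) + \sqrt{\alpha_{xy}^2 - 1}\right) = G(x, y),
\]
and \cref{lem:inexy} then delivers $G(x, y) \geq 0$, completing the proof. The main obstacle is marshalling the monotonicity analysis cleanly across the two regimes of $\Delta_{xy}$ (and the sub-cases $z_- \geq L$ versus $z_- < L$); once the three candidate minima are pinned down, each evaluation reduces to a short trigonometric computation or an application of the previously-established lemmas.
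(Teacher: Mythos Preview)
Your proposal is correct and follows essentially the same approach as the paper: fix $x,y$, reduce to three candidate minimisers (the two endpoints and the interior stationary point $z_-$), handle the endpoints by direct computation, and dispatch $z_-$ via \cref{lem:subst1}, \cref{lem:subst2}, and \cref{lem:inexy}. The only cosmetic difference is that you phrase the monotonicity of $F$ on intervals like $(-\infty,z_-]$ where $F$ is not actually defined; restricting to the domain $[xy-\sqrt{(1-x^2)(1-y^2)},\,xy+\sqrt{(1-x^2)(1-y^2)}]$ (which contains $[L,R]$ under the hypothesis $|x|\le y$) would be cleaner, but this does not affect the argument on $[L,R]$.
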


For the purpose of illustration, we plot $F(x, y, z)$ for $x = 0.14, y = 0.36$ and $z \in [x + y - 1, 1 - y + x]$ in~\cref{fig:F}; these values for $x$ and $y$ have been chosen to highlight the more difficult case in our proof i.e.~when there is a local minimum. We observe that the triangle inequalities (i.e.~$z \in [x + y - 1, 1 - y + x]$) seem to be necessary for our rounding scheme to work; for example, $x = 0.03, y = 0.99, z = -0.01$ leads to $F(x, y, z) \approx -2.07$.

\begin{figure}
\centering
\caption{Plot of $F(0.14, 0.36, z)$}\label{fig:F}
\begin{tikzpicture}[>=stealth]
    \begin{axis}[
        width=\textwidth,
        height=\textwidth/2,
        xmin=-.5-.03,
        xmax=.78+.03,
        ymin=0.63-0.01,
        ymax=0.71+0.02,
        axis x line=middle,
        axis y line=middle,
        axis line style=<->,
        xlabel={$z$},
        ylabel={{\color{blue}$F(0.14, 0.36, z)$}},
        grid=both
        ]
        \addplot[no marks,blue,thick] expression[domain=-.5:.78,samples=300]{
        -0.957204 + 1.22718 * x + rad(acos(-0.0545594 + 1.08253 * x))};
    \end{axis}
\end{tikzpicture}
\end{figure}

\begin{proof}
    Fixing $x, y$, we want to find the $z$ that minimises the value of $F(x, y, z)$. We consider the extreme values of $z$, then inspect the local minima.
    \begin{description}
    \item[Case 1: $\bm{z = x + y - 1}$.] In this case, $1 - z + x - y = 2 - 2y$, so
    \[
    \frac{\pi}{4} \frac{1 - z + x - y}{1 - y} = \frac{\pi}{2}.
    \]
    Furthermore, $z - xy = -1 + x + y - xy = - (1 - x)(1 - y)$, so
    \[
    \rho_{xyz} = - \frac{(1 - x)( 1 - y)}{\sqrt{1 - x^2}\sqrt{1 - y^2}} < 0.
    \]
    Thus in this case $\arccos \rho_{xyz} > \pi / 2$, so $F(x, y, z) > 0$.
    \item[Case 2: $\bm{z = 1 - y + x}$.] In this case $1 - z + x - y = 0$, so
    \[
    F(x, y, z) = \arccos \rho_{xyz} - 0 \geq 0.
    \]
    \item[Case 3: $\bm{x + y - 1 < z < 1 - y + x}$.] We are now looking for local minima of $F(x, y, z)$ with respect to $z$. Thus, we compute
    \[
    \frac{\partial}{\partial z} F(x, y, z) = \frac{\pi^2(1 - x^2 -y^2 - z^2 + 2xyz) - 16(1 - y)^2}{4(1 - y)\sqrt{1- x^2 - y^2 - z^2 + 2xyz}(\pi \sqrt{1 - x^2 - y^2 - z^2 + 2xyz} + 4(1 - y))}.
    \]
    We observe that the denominator of this fraction is always positive, and thus the sign of $\frac{\partial}{\partial z} F(x, y, z)$ is given be the numerator, namely
    \[
    \pi^2(1 - x^2 -y^2 - z^2 + 2xyz) - 16(1 - y)^2.
    \]
    We compute the discriminant of this quadratic, and observe that it equals $\Delta_{xy}$. Thus, if $\Delta_{xy} < 0$ there are no stationary points; if $\Delta_{xy} = 0$ then there is one stationary point but it is neither a local minimum nor a local maximum; and if $\Delta_{xy} > 0$ there are two stationary points, located at
    \[
    z = xy \pm \frac{\sqrt{\Delta_{xy}}}{2\pi^2}.
    \]
    The first of these is a local minimum, and the second is a local maximum. Thus we are interested in $z = xy - \frac{\sqrt{\Delta_{xy}}}{2\pi^2}$. Of course if this value does not satisfy our boundary conditions, then there are no local minima; so we can assume that $xy - \frac{\sqrt{\Delta_{xy}}}{2\pi^2} \geq x + y - 1$. By~\cref{lem:subst1,lem:subst2}, when $z = xy - \frac{\sqrt{\Delta_{xy}}}{2\pi^2}$, we have
    \[
        F(x, y, z) = \pi - \arccsc \alpha_{xy} - \frac{\pi}{4}(1 + x) - \sqrt{\alpha^2_{xy} - 1} = G(x, y).
    \]
    By~\cref{lem:inexy}, $G(x, y) \geq 0$, so in this case it follows that $F(x, y, z) \geq 0$ also.
    \end{description}
    Thus, our conclusion follows in all cases.
\end{proof}

\begin{proof}[Proof of~\cref{lem:bound}]
    We first deal with the edge cases, when $x, y = \pm 1$, then with the typical case. As $y \geq |x| \geq 0$, for either $x$ or $y$ to be $\pm 1$, it must be the case that $y = 1$, so suppose this is the case. Then the left side of~\eqref{eq:bound} is $0$, whereas the right side is $(x - z) / 4$. By assumption $x + y - 1 \leq z$, but since $y = 1$ this is the same as $x \leq z$, hence $(x - z) / 4 \leq 0$ as required. We can thus assume that $y < 1$ and thus $x, y \in (-1, 1)$.
    In this case,~\eqref{eq:bound} is equivalent to $F(x, y, z) \geq 0$, which is implied by~\cref{lem:inexyz}.
\end{proof}

Having proved~\cref{lem:bound}, we return to the proof of~\cref{lem:config}.

\begin{proof}[Proof of~\cref{lem:config}]
    Suppose we let $x = \vx_u \cdot \vt$, $y = \vx_v \cdot \vt$ and $z = \vx_u \cdot \vx_v$. Then $(\vt \pm \vx_u) \cdot (\vt \pm \vx_v) \geq 0$ amounts to
    \begin{align*}
        1 + x + y + z & \geq 0, \\
        1 - x - y + z & \geq 0, \\
        1 - x + y - z & \geq 0, \\
        1 + x - y - z & \geq 0,
    \end{align*}
    or equivalently $|x + y| - 1 \leq z \leq 1 - |y - x|$. With this reformulation, we must show that
    \begin{equation}\label{eq:goal}
    \Pr_{a, H}[c(u) \neq c(v)] \geq \frac{1 - z + y - x}{4}.
    \end{equation}

    We first consider the effect of swapping $u$ and $v$, and/or negating $\vt$ has on the problem. Observe that all of these transformations maintain the requirement $(\vt \pm \vx_u) \cdot (\vt \pm \vx_v) \geq 0$. These transformations do not change $z$, but they have the effect of mapping $(x, y) \mapsto (y, x)$ (for swapping) and $(x, y) \mapsto (-x, -y)$ (for negating). Thus, by swapping we can achieve $(x, y) \mapsto (y, x)$, and by swapping and negating we can achieve $(x, y) \mapsto (-y, -x)$.
    
    We claim that these transformations mean we can consider only the case $x \leq y$, $|x| \leq |y|$. Indeed, if $x > y$ then by swapping $\vx_u$ and $\vx_v$ we can reduce to the case $x \leq y$ while increasing the right side of~\eqref{eq:goal} (i.e.~reducing to a necessary subcase). Furthermore, once we are in this case, if $|x| > |y|$, then by both swapping and negating we can get to the case $|x| \leq |y|$ while remaining in the case $x \leq y$ (and not changing the values in \eqref{eq:goal}). Henceforth we assume that $x \leq y$ and $|x| \leq |y|$.

    Observe that $x \leq y$ and $|x| \leq |y|$ are equivalent to $|x| \leq y$; hence we know that $y - x \geq 0$ and $x + y \geq 0$, so $|x + y| - 1 \leq z \leq 1 - |y - x|$ becomes $x + y - 1 \leq z \leq 1 - y + x$.

    Now, what is the inner product of $\vy_u$ and $\vy_v$? We observe that, as long as $| \vx_w \cdot \vt | < 1$, we have
    \[
    \vy_w = \frac{\vx_w - (\vx_w \cdot \vt) \vt}{\sqrt{1 - (\vx_w \cdot \vt)^2}}.
    \]
    Hence in this case we have
    \begin{multline*}
    \vy_u \cdot \vy_v = \frac{(\vx_u - (\vx_u \cdot \vt) \vt) (\vx_v - (\vx_v \cdot \vt) \vt)}{\sqrt{1 - (\vx_u \cdot \vt)^2}\sqrt{1 - (\vx_v \cdot \vt)^2}}
    = \frac{\vx_u \cdot \vx_v - (\vx_u \cdot \vt) (\vx_v \cdot \vt)}{\sqrt{1 - (\vx_u \cdot \vt)^2}\sqrt{1 - (\vx_v \cdot \vt)^2}} =\\
    \frac{z - xy}{\sqrt{1 - x^2}\sqrt{1 - y^2}}.
    \end{multline*}
    Conversely if $ | \vx_w \cdot t| = 1$, then $\vy_w$ is perpendicular to all other vectors $\vy_{w'}$, so we have $\vy_u \cdot \vy_v = 0$. Hence we observe that in all cases $\vy_u \cdot \vy_v = \rho_{xyz}$.

    Now, let us investigate what $\Pr_{a, H}[c(u) \neq c(v)]$ is. There are two main cases, depending on the sign of $x$.
    \begin{description}
        \item[Case 1: $\bm{x \leq 0}$.] In this case, with probability $-x$ we have that $a \leq |x| \leq |y|$, and hence $c(u) = -1, c(v) = 1$. With probability $|y| - |x| = x + y$, we have $|x| < a \leq |y|$, in which case $c(v) = 1$ but $c(u)$ is uniformly randomly distributed in $\{ \pm 1\}$. With probability $1 - |y| = 1 - y$ we have $a > |y| \geq |x|$, in which case $\vy_u, \vy_v$ are separated by the random hyperplane $H$ with probability $(1 / \pi) \arccos \rho_{xyz}$. So
        \[
        \Pr_{a, H}[c(u) \neq c(v)] = -x + \frac{x + y}{2} + \frac{1 - y}{\pi} \arccos \rho_{xyz}.
        \]
        By~\cref{lem:bound}, we have $((1 - y) / \pi) \arccos \rho_{xyz} \geq (1 - z + x - y) / 4$, so
        \[
        \Pr_{a, H}[c(u) \neq c(v)] \geq -x + \frac{x + y}{2} + \frac{1 - z + x - y}{4} = \frac{1 - z + y - x}{4}.
        \]
    \item[Case 2: $\bm{x > 0}$.] In this case, with probability $x$ we have $a \leq |x| \leq |y|$, and hence $c(u) = 1, c(v) = 1$. With probability $|y| - |x| = y - x$, we have $|x| < a \leq |y|$, in which case $c(v) = 1$ but $c(u)$ is uniformly randomly distributed in $\{ \pm 1 \}$. With probability $1 - |y| = 1 - y$ we have $a > |y| \geq |x|$, in which case $\vy_u, \vy_v$ are separated by $H$ with probability $(1 / \pi) \arccos \rho_{xyz}$. So, applying~\cref{lem:bound} as before, we have
    \[
        \Pr_{a, H}[c(u) \neq c(v)] = \frac{y - x}{2} + \frac{1 - y}{\pi} \arccos \rho_{xyz} \geq \frac{ y - x}{2} + \frac{1 - z + x - y}{4} \geq
        \frac{1 - z + y - x}{4}.
    \]
    \end{description}
    Hence our conclusion follows in all cases.
\end{proof}

{\small
\bibliographystyle{alphaurl}
\bibliography{nz}
}

\end{document}